\newtheorem{assumption}{Assumption}
\newtheorem{theorem}{Theorem}
\definecolor{kthblue}{rgb}{0.098, 0.329, 0.651}
\definecolor{newred}{RGB}{153, 0, 0}
\begin{document}
\title{\LARGE \bf
Model-free fast charging of lithium-ion batteries\\
by online gradient descent
}
\author{Hamed Taghavian, Malin Andersson, and Mikael Johansson
\thanks{This work was supported by KTH Royal Institute of Technology.
The authors are with the Division of Decision and Control Systems, KTH Royal Institute of Technology. Emails: \tt\small{hamedta@kth.se}, \tt\small{malinan9@kth.se}, \tt\small{mikaelj@kth.se}.}}

\maketitle
\thispagestyle{empty}
\pagestyle{empty}

\begin{abstract}
A data-driven solution is provided for the fast-charging problem of lithium-ion batteries with \replaced{multiple safety and aging}{various} constraints. 
 The proposed method optimizes the charging current based on the observed history of measurable battery quantities, such as the input current, terminal voltage, and temperature.  The proposed method does not need any detailed battery model or full-charging training episodes. The theoretical convergence is proven under mild conditions and is validated numerically on several linear and nonlinear battery models, including single-particle and equivalent-circuit models.
\end{abstract}

\section{INTRODUCTION}
Whether it be trains, road vehicles, or boats, electrified transportation is believed to have a positive impact on both the ecology and the economy. Among the different energy storage units available for electrified transportation, lithium-ion batteries are expected to remain the dominant option for at least a decade, due to their many favorable properties~\cite{SWHPW2018}.

Despite their numerous advantages, however, electric vehicles equipped with lithium-ion batteries suffer from long charging times compared to conventional gasoline-fueled vehicles~\cite{NaS2021}. Simply increasing the charging current does not solve this problem due to the safety and degradation issues associated with high charging currents. Hence, a recent line of research has focused on developing charging protocols that 
maximize the charging current while accounting for battery health and safety~\cite{TCFOLCJELLo2019}.

Obtaining such optimal charging protocols normally requires accurate models of the battery dynamics which are not always available. Identifying the parameters that determine the dynamic behavior of lithium-ion batteries is difficult~\cite{LDCJRJS2022} and there are several factors, such as aging, ambient conditions, etc., that can cause the actual dynamics to drift away from those of a previously identified model.
%
Therefore, several model-free charging techniques have been proposed that avoid system identification and instead operate directly based on observing the data collected from the lithium-ion batteries in real-world operation. Available approaches include Bayesian optimization~\cite{JiW2021}, iterative learning control~\cite{WaXL2019} and reinforcement learning~\cite{HLWJ2023}. To converge to the optimal charging protocol, these methods rely on a learning process over several full-charging episodes. Such charging experiments are time-consuming and degrade the batteries.

In this paper, we first note that the optimal solutions to typical fast-charging problems are of the bang-ride form, that is, they operate on the boundary of the feasible set~\cite{DCHCG2023}. This property has already been proven for various fast-charging problems~\cite{selector,malinIFAC,PLATM2020} and conjectured to be true for several others~\cite{LVWWZTWZ2021}. The industry standard for fast charging of lithium-ion batteries, the Constant-Current-Constant-Voltage (CCCV) charging protocol, is of bang-ride form, with an initial phase where the maximal current is applied (bang), followed by a phase where the current is gradually decreased to maintain a constant terminal voltage (ride). We exploit this fact to design a data-driven controller with a fixed structure that converges to the bang-ride charging protocols in real-time, without requiring any prior training episodes. The proposed approach is less costly than other model-free methods, as it does not need additional charging experiments.

The proposed controller uses measured input-output data of the battery to learn its bang-ride protocol, without relying on a battery model or access to the cost function in the fast-charging problem. The protocol does not need to know the functional form of the constraints, it only assumes that constraint violations can be measured. 
Our proposed control structure has a similar switching nature to \cite{LVWWZTWZ2021}. However, the approach in~\cite{LVWWZTWZ2021} requires a full model and its parameters and does not provide a technique for tuning the PI controllers nor proves convergence to the desired charging protocol. In contrast, we propose an adaptive feedback controller that guarantees convergence to the bang-ride protocol without a battery model or manual tuning of the controllers.

Convergence to the bang-ride protocol is realized by solving an online convex optimization problem. Online convex optimization is an adaptation of convex optimization algorithms to unknown and changing environments~\cite{Zink2003}. In these problems, the cost function changes in every time step, and the algorithm only has access to the previous set of cost functions and decision variables~\cite{hazan2016}. An additional complexity emerges when the cost functions in different time steps are coupled~\cite{NoM2022,SiDPG2020}. 
This happens in optimal control problems~\cite{NoM2022} as well as in fast-charging problems since there is an underlying dynamic system that affects the cost functions in the overall optimization process. Nevertheless, the online optimization-based algorithm proposed in this paper can achieve a sublinear regret
similar to the uncoupled case~\cite{Zink2003}, and contrary to~\cite{NoM2022}, it does not require linearity of the dynamics and strong convexity of the cost functions. Furthermore, in contrast with other learning-based control systems~\cite{HoW2013}, such as iterative feedback tuning (IFT) \cite{ift} and simultaneous perturbation stochastic approximation (SPSA) \cite{spall}, the monotone properties specific to fast charging problems are exploited to avoid estimating the gradient directions. Thus, the charging current is not disrupted by running experiments during the learning process. 

This paper is structured as follows. We define the fast-charging optimization problems in Section~\ref{sec:problem_statement}, develop a model-free fast-charging algorithm based on online gradient descent in Section~\ref{sec:online}, and present a few case studies for numerical evaluation in Section~\ref{sec:numeric}. Conclusive remarks and potential future research directions are provided in Section~\ref{sec:conclusion}.

\section{FAST CHARGING AND BANG-RIDE PROTOCOLS}\label{sec:problem_statement}
Consider a discrete-time battery model of the form 
\begin{subequations}\label{eqn:system}
\begin{align}
    x_{t+1}=f(x_t,u_t), &\quad  t\in\mathbb{N}_0 \label{eqn:system(a)}\\
    y_{i,t}=h_i(x_t,u_t), &\quad  i=1,2,\dots, p \label{eqn:system(b)}
\end{align}
\end{subequations}
where $x_t\in\mathbb{R}^n$ is the battery state, $y_{i,t}\in\mathbb{R}$ are the measurable outputs and $u_t\in\mathbb{R}$ is the input current. As a convention,  $u_t$ is positive when we charge (feed current into the battery). A general fast-charging optimization problem with multiple constraints can be stated as
\begin{equation}
    \begin{array}[c]{rll}
    \underset{u}{\text{maximize}} & J(x_0,u)=\sum_{t=0}^{t_f}L(x_t,u_t) \\
    \mbox{subject to} & y_{i,t}\leq \Bar{y}_i,\quad i=1,2,\dots,p\\
    & x,u,y_i \textnormal{ satisfy (\ref{eqn:system})} \\
    \end{array} \label{eqn:optimization}
\end{equation}
where $x_0\in\mathbb{R}^n$ is the initial state of the battery. We make the following assumption throughout the paper.
\begin{assumption}\label{ass:h}
    The output functions $h_i(x_t,u_t)$ are differentiable and strictly increasing in $u_t\in \mathbb{R}$ for all $x_t\in \mathbb{R}^n$.
\end{assumption}

We assume the first constraint in (\ref{eqn:optimization}) imposes an explicit upper bound on the charging current, \emph{i.e.},
\begin{equation}\label{eqn:u<umax}
    y_{1,t}=h_1(x_t,u_t)=u_t \leq u_{\rm max}=\bar{y}_1
\end{equation}

In typical fast-charging problems, the running-cost $L(x_t,u_t)$ and transition functions  $f(x_t,u_t)$ are continuous and increasing in $x_t,u_t$, and the constraint functions $h_i(x_t,u_t)$ are continuous and increasing in $u_t$~\cite{DCHCG2023}. This makes (\ref{eqn:optimization}) a monotone optimal control problem, whose solution is known to touch the feasible set boundaries, regardless of the exact transition, cost, or constraint functions~\cite{selector}. If the optimal solution operates on the feasible set boundary at all times, \emph{i.e.}, at least one inequality constraint in (\ref{eqn:optimization}) is active at every time step, then the optimal solution is said to be of bang-ride form. 
This has shown to be the case for several specific fast-charging problems on the form (\ref{eqn:optimization})~\cite{malinIFAC,PLATM2020}. More general sufficient conditions that ensure the optimal solution of (\ref{eqn:optimization}) is of bang-ride form are provided in~\cite{selector}. Hence in this paper, we provide an algorithm that implements the bang-ride solution of the fast-charging problem (\ref{eqn:optimization}) when the battery model, the constraints, and the cost function are unknown.

To calculate the bang-ride solution of (\ref{eqn:optimization}), we note that since the output functions are strictly increasing (Assumption~\ref{ass:h}), each equation
\begin{equation}\label{eqn:constraint_equation}
    h_i(x_t,u_t)=\Bar{y}_i, \quad i=1,2,\dots,p
\end{equation}
has at most one solution, $u_t=K_i(x_t)$ where $K_i:\mathbb{R}^n\to \mathbb{R}$ is a possibly nonlinear function. For example, for the first constraint (\ref{eqn:u<umax}), equation (\ref{eqn:constraint_equation}) has the uninque solution
$$
u_t=K_1(x_t)=u_{\rm max}
$$
By defining $K_i(x_t)=+\infty$ in case equation (\ref{eqn:constraint_equation}) has no real solution for $u_t$, it is possible to write the ideal bang-ride protocol as the following nonlinear state-feedback law
\begin{equation}\label{eqn:selector}
    u_t=\min\lbrace K_1(x_t),
    K_2(x_t),\dots, K_p(x_t)\rbrace
\end{equation}
This expression is well-defined since at least one argument, \emph{i.e.} $K_1(x_t)$, is always finite.


If an accurate model is available and all system states can be measured, the bang-ride control (\ref{eqn:selector}) is simple to implement. Even if the feedback functions $K_i$ cannot be determined explicitly, the monotonicity of the output functions makes it easy to solve the constraint equations (\ref{eqn:constraint_equation}) numerically for a given $x_t$ to determine each $K_i(x_t)$. 

In practice, however, battery models are not perfect and the full system state is not measured. It is then considered to be difficult to obtain a high-quality estimate of the model parameters and the state vector $x_t$~\cite{LDCJRJS2022}. To overcome these challenges, we develop a data-driven implementation of (\ref{eqn:selector}) in the next section that does not rely on the system model, does not require the state vector $x_t$, and  does not require solving the constraint equations (\ref{eqn:constraint_equation}) in every sample.  

\section{MODEL-FREE CHARGING}\label{sec:online}

\subsection{Active constraint switching}
The bang-ride control law (\ref{eqn:selector}) effectively determines both which constraint should be active and the control input that keeps the corresponding output at its maximally allowed value. When we do not have access to an accurate model and cannot measure the full state vector, we need other mechanisms to determine what constraints to activate and how the input should be adjusted to keep the corresponding output function at its limit. These mechanisms should only rely on the observed system outputs $y_{i,t}$.

Let us define the \emph{constraint errors} which measure deviations from the inequality constraint boundaries in (\ref{eqn:optimization}) as follows
\begin{equation}\label{eqn:cnstr_error}
    e_{i,t}=\gamma_i\bigl(\Bar{y}_i - y_{i,t}\bigr), \quad i=1,2,\dots,p
\end{equation}
where $\gamma_i$'s are some positive constants. The $i$-th inequality constraint in (\ref{eqn:optimization}) is satisfied at time $t$ if and only if $e_{i,t}\geq 0$, and it is said to be active when $e_{i,t}= 0$. The active constraint, \emph{i.e.} the one that the bang-ride controller aims to ride, is then 
\begin{equation}\label{eqn:i_star}
    i^{\star}(t)=\arg\min_i \lbrace e_{i,t}\rbrace
\end{equation} 
Equation (\ref{eqn:i_star}) yields in a switching mechanism based on the constraint errors (\ref{eqn:cnstr_error}) which, in constrast to (\ref{eqn:selector}), are directly measurable. For notational convenience, we sometimes 
drop the argument $t$ and write $i^{\star}=i^{\star}(t)$.

\subsection{Active constraint riding}
Once the constraint index $i^{\star}$ is determined, an appropriate control signal $u_t$ is needed to bring the respective error in (\ref{eqn:cnstr_error}) to zero so that the selected constraint is ridden. 
This problem is equivalent to designing an output-feedback controller for the unknown plant  (\ref{eqn:system}) to track the reference signal $r_t=\bar{y}_i$. 

\begin{figure}
        \centering
        \includegraphics[width=1\linewidth]{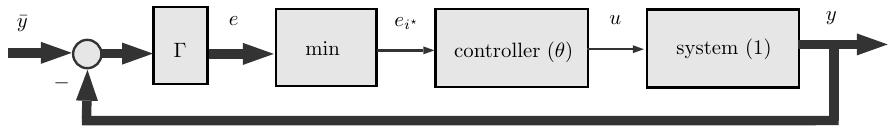}
        \caption{The data-driven control scheme: The thick lines represent vector signals in $\mathbb{R}^p$ and $\Gamma=\textnormal{diag}(\gamma)$ is a non-negative diagonal matrix used to adjust the loop gains if necessary. The loop is only connected in one component at a time, which corresponds to the active constraint.}
        \label{fig:scheme}
\end{figure}

For this purpose, we consider the control structure in Figure~\ref{fig:scheme}, where the controller block uses a fixed differentiable structure $u_t=C(\mathcal{H}_t,\theta_t)$ parametrized by $\theta_t \in\mathbb{R}^m$, where
\begin{equation}\label{eqn:history}
    \mathcal{H}_t=\left\lbrace 
    \left.
    \left(u_k,y_{i^{\star}(k),k}\right)
    \right\vert 
    0\leq k\leq t-1\right\rbrace
\end{equation}
is the observed input-output data history just before $t$. Since an accurate model of the system dynamics is not available, 
we propose a learning-based control law whose parameters $\theta$ are adapted based on the observed data. The underlying algorithm that tunes the parameters attempts to minimize the one-step-ahead constraint error squared (\ref{eqn:cnstr_error}) at every time step, by solving the following online optimization problem
\begin{equation}
    \begin{array}[c]{rll}
    \underset{\theta\in\Theta}{\text{minimize}} & J_{t}(\theta)=e_{i^{\star},t}^2\\
    \mbox{subject to}
    & u_t=C(\mathcal{H}_t,\theta)\\
    & u,e_i \textnormal{ satisfy (\ref{eqn:system}), (\ref{eqn:cnstr_error})} \\
    \end{array} \label{eqn:online_opt}
\end{equation}
where $\Theta\subseteq\mathbb{R}^m$. The gradient of $J_{t}(\theta)$ at $\theta=\theta_{t}$ is given by
\begin{equation}\label{eqn:true_gradient}
    \nabla_{\theta}J_{t}(\theta_t)=-c_te_{i^{\star},t}\nabla_\theta C(\mathcal{H}_t,\theta_t)
\end{equation}
where
\begin{equation}\label{eqn:c_t}
    c_t=2\gamma_{i^{\star}}\frac{\partial h_{i^{\star}}}{\partial u}(x_t,u_t) 
\end{equation}
Since the output functions are monotonic by Assumption~\ref{ass:h} and $\gamma_i>0$,  $c_t$ is non-negative. Thus, the vector
\begin{equation}\label{eqn:g}
    g_t=-e_{i^{\star},t}\nabla_\theta C(\mathcal{H}_t,\theta_t)
\end{equation}
points to the exact gradient direction and 
is computable solely based on the observed data. Therefore, we use (\ref{eqn:g}) to perform one projected gradient descent step, \emph{i.e.},
\begin{equation}\label{eqn:gradient_step}
    \theta_{t+1}=\operatorname{proj}_{\Theta} \lbrace \theta_{t}-\alpha_{t} g_t \rbrace
\end{equation}
towards the optimal parameter $\theta=\theta^{\star}_t$ that solves (\ref{eqn:online_opt}). 
The proposed method is summarized in Algorithm~\ref{Alg:BangRide}.

\begin{algorithm}
\caption{Data-driven bang-ride control:}\label{alg}
\begin{algorithmic}[1]
\Require $t_f,\Bar{y}_i$ ($i=1,2,\dots,p$) \Comment{input data.}
\Require $\theta_{0}\in\mathbb{R}^m$ \Comment{initial parameters.}
\State $t= 0$, $\mathcal{H}_{0}= \emptyset$, $i^{\star}=1$ 
      \While{$t\leq t_f$}
        \State $u_t \gets C(\mathcal{H}_t,\theta_t)$.
        \State Apply $u_t$ and observe $y_{i,t}$ ($i=1,2,\dots,p$).
        \State Compute $i^{\star}$ from (\ref{eqn:i_star}). \Comment{next constraint to ride.}
        \State Compute $g_t$ from (\ref{eqn:g}).
        \State Update $\theta_{t+1}$ from (\ref{eqn:gradient_step}).        
        \State $\mathcal{H}_{t+1}\gets \mathcal{H}_{t}\cup \lbrace (u_t,y_{i^{\star},t}) \rbrace$.
        \State $t\gets t+1$
      \EndWhile
\end{algorithmic}\label{Alg:BangRide}
\end{algorithm}
\subsection{Convergence analysis}
The theoretical performance of the proposed online algorithm are evaluated via their total regret
\begin{equation}\label{eqn:regret}
    \mathcal{R}_{t_f}=\sum_{t=0}^{t_f}J_t(\theta_t)- J_t(\theta^{\star}_t).
\end{equation}
In addition to the usual assumptions of convexity and boundedness, we assume that the sequence of optimizers in (\ref{eqn:online_opt}) is convergent. This assumption, which is also in place in the SPSA algorithm~\cite{spall}, is required for regret analysis. Because, unlike the usual setting for online convex optimization algorithms, the regret defined in (\ref{eqn:regret}) is not restricted to statically optimal solutions (constant sequences)~\cite{Zink2003}.

\begin{theorem}\label{thm:regret}
Assume $\Theta\subseteq \mathbb{R}^m$ is a bounded set, $\theta_0\in\Theta$, $J_t(\theta)$ is convex in $\theta$, $\varepsilon_t=\Vert\theta^{\star}_{t+1}-\theta^{\star}_{t}\Vert=O(t^{-\mu_2})$, and that Assumption~\ref{ass:h} holds. Let the step size be
\begin{equation}\label{eqn:eqn_step_size}
    \alpha_t=\left\lbrace
    \begin{array}{ll}
        1, & t=0 \\
        t^{-\mu_1}, & t>0
    \end{array}\right.
\end{equation}
where $\mu_1\in(0,1)$ and $\mu_1<\mu_2$. Let $\Vert g_t\Vert,\vert c_t\vert\leq G$ for some $G>0$ and assume that the sequence $c_{t+1}/\alpha_{t+1}-c_{t}/\alpha_{t}$ changes sign a finite (possibly zero) number of times. Then the regret of Algorithm~\ref{Alg:BangRide} satisfies
\begin{equation}\label{eqn:regret=O}
    \mathcal{R}_{t_f} =O({t_f}^{\mu^{\star}})
\end{equation}
where $\mu^{\star}=\max\lbrace \mu_1,1-\mu_1,1+\mu_1-\mu_2\rbrace$.
\end{theorem}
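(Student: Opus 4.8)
The plan is to combine the standard online-convex-optimization machinery (convexity together with the non-expansiveness of the projection) with a summation-by-parts argument tailored to the mismatch between the true gradient $\nabla_\theta J_t(\theta_t)=c_t g_t$ in (\ref{eqn:true_gradient}) and the surrogate direction $g_t$ in (\ref{eqn:g}) that the algorithm actually descends along in (\ref{eqn:gradient_step}).

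First I would use convexity of $J_t$ to obtain the first-order bound
\[J_t(\theta_t)-J_t(\theta^\star_t)\le\langle\nabla_\theta J_t(\theta_t),\theta_t-\theta^\star_t\rangle=c_t\langle g_t,\theta_t-\theta^\star_t\rangle,\]
where $c_t\ge 0$ by Assumption~\ref{ass:h}. Applying the non-expansiveness of $\operatorname{proj}_\Theta$ to (\ref{eqn:gradient_step}) and the fact that $\theta^\star_t\in\Theta$ yields the descent inequality
\[\langle g_t,\theta_t-\theta^\star_t\rangle\le\frac{\|\theta_t-\theta^\star_t\|^2-\|\theta_{t+1}-\theta^\star_t\|^2}{2\alpha_t}+\frac{\alpha_t}{2}\|g_t\|^2.\]
Writing $D_t=\|\theta_t-\theta^\star_t\|^2$ and $B=\operatorname{diam}(\Theta)$, substituting this bound, then replacing $\|\theta_{t+1}-\theta^\star_t\|^2$ by $D_{t+1}$ via the triangle inequality and $\varepsilon_t=\|\theta^\star_{t+1}-\theta^\star_t\|$ (which costs an additive $2B\varepsilon_t$ since $c_t\ge 0$), and summing over $t$ decomposes the regret into three pieces: a telescoping term $\tfrac12\sum_t(c_t/\alpha_t)(D_t-D_{t+1})$, a drift term $\sum_t(c_t/\alpha_t)B\varepsilon_t$, and a gradient-noise term $\tfrac12\sum_t c_t\alpha_t\|g_t\|^2$.

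The hard part is the telescoping term, because the weights $a_t=c_t/\alpha_t$ are not monotone, so the sum does not collapse cleanly. I would handle it by Abel summation,
\[\sum_{t=0}^{t_f}a_t(D_t-D_{t+1})=a_0D_0-a_{t_f}D_{t_f+1}+\sum_{t=1}^{t_f}(a_t-a_{t-1})D_t,\]
and then exploit the hypothesis that $a_{t+1}-a_t=c_{t+1}/\alpha_{t+1}-c_t/\alpha_t$ changes sign only finitely often. This lets me partition $\{1,\dots,t_f\}$ into a fixed number of intervals on which $a_t$ is monotone; on each the absolute increments $|a_t-a_{t-1}|$ telescope, so the partial sum is bounded by $B^2$ times the variation of $a_t$ over that interval. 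Since $|a_t|=|c_t|/\alpha_t\le G\,t^{\mu_1}$, the total variation is $O(t_f^{\mu_1})$, giving a telescoping contribution of order $O(t_f^{\mu_1})$.

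The remaining two terms are routine $p$-series estimates. Using $|c_t|\le G$, $\varepsilon_t=O(t^{-\mu_2})$ and $1/\alpha_t=t^{\mu_1}$, the drift term is $O\!\big(\sum_t t^{\mu_1-\mu_2}\big)=O(t_f^{1+\mu_1-\mu_2})$, where $\mu_1<\mu_2$ guarantees this exponent is below one (and when it is negative the term is merely $O(1)$ and absorbed). Using $|c_t|\le G$ and $\|g_t\|\le G$, the gradient-noise term is $O\!\big(\sum_t\alpha_t\big)=O\!\big(\sum_t t^{-\mu_1}\big)=O(t_f^{1-\mu_1})$, sublinear since $\mu_1\in(0,1)$. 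Adding the three contributions gives $\mathcal{R}_{t_f}=O\!\big(t_f^{\max\{\mu_1,\,1-\mu_1,\,1+\mu_1-\mu_2\}}\big)=O(t_f^{\mu^\star})$, as claimed. The only genuinely delicate step is the sign-change/variation argument for the telescoping term; everything else follows from the standard descent bound together with the elementary estimates for $\sum t^{-s}$.
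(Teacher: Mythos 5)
Your proposal is correct and follows essentially the same route as the paper's proof: the convexity bound plus the non-expansive projection inequality, a comparator shift from $\theta^{\star}_t$ to $\theta^{\star}_{t+1}$ costing an additive drift of order $\varepsilon_t c_t/\alpha_t$, Abel summation combined with the finite-sign-change hypothesis to control the non-monotone weights $c_t/\alpha_t$, and $p$-series estimates yielding the three exponents $\mu_1$, $1-\mu_1$, and $1+\mu_1-\mu_2$. The only noteworthy difference is that you keep the gradient-noise term in the form $\tfrac12\sum_t c_t\alpha_t\Vert g_t\Vert^2$ instead of $\tfrac12\sum_t(\alpha_t/c_t)\Vert\nabla_\theta J_t(\theta_t)\Vert^2$, which neatly sidesteps the paper's extra step of positing a uniform lower bound $c_t\geq\epsilon>0$.
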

\begin{proof}
Since the objective function $J_t(\theta)$ is convex, 
\begin{equation}\label{eqn:thm.proof.stp1}
J_t(\theta_t)-J_t(\theta^{\star}_t)\leq \langle \nabla_{\theta}J_t(\theta_t),\theta_t-\theta^{\star}_t\rangle
\end{equation}
Since the projection operator is non-expansive, by using the relation (\ref{eqn:gradient_step}) we obtain
\begin{align}\label{eqn:thm.proof.stp1.5}
    \Vert\theta_{t+1}-\theta^{\star}_t \Vert^2 &=\Vert \operatorname{proj}_{\Theta}\lbrace\theta_t-\alpha_tg_t\rbrace -\theta^{\star}_t \Vert^2 \\
    &\leq \Vert \theta_t-\alpha_tg_t -\theta^{\star}_t \Vert^2 \nonumber\\
    &=\Vert \theta_t -\theta^{\star}_t \Vert^2 +\alpha_t^2\Vert g_t\Vert^2 -2 \alpha_t\langle g_t,\theta_t-\theta^{\star}_t\rangle \nonumber
\end{align}
If $c_t=0$ then $\nabla_{\theta}J_t(\theta_t)=c_tg_t=0$ and thereby,
$$
J_t(\theta^{\star}_t)=J_t(\theta_t)
$$
holds from (\ref{eqn:thm.proof.stp1}), that is the corresponding regret term in (\ref{eqn:regret}) equals zero. By Assumption~\ref{ass:h} and the fact that the scalars $\gamma_i$ are positive, $c_t> 0$ holds from (\ref{eqn:c_t}). Therefore to examine the worst-case regret, we assume $c_t\geq \epsilon>0$. This allows writing (\ref{eqn:thm.proof.stp1.5}) in terms of the gradient $g_t=\nabla_{\theta}J_t(\theta_t)/c_t$, 
\begin{align}\label{eqn:thm.proof.stp2}
    &\langle \nabla_\theta J_t(\theta_t),\theta_t-\theta^{\star}_t\rangle
    \leq \\
    &\frac{c_t}{2\alpha_t}
    \left(\Vert \theta_t -\theta^{\star}_t \Vert^2 -\Vert\theta_{t+1}-\theta^{\star}_t \Vert^2\right)
    +\frac{\alpha_t}{2c_t}\Vert \nabla_\theta J_t(\theta_t)\Vert^2 \nonumber
\end{align}
By combining inequalities (\ref{eqn:thm.proof.stp1}) and (\ref{eqn:thm.proof.stp2}), we obtain
\begin{align}\label{eqn:proofstep2.5}
    & J_t(\theta_t)-J_t(\theta^{\star}_t)\leq\\
    &\frac{c_t}{2\alpha_t}
    \left(\Vert \theta_t -\theta^{\star}_t \Vert^2 -\Vert\theta_{t+1}-\theta^{\star}_t \Vert^2\right)
    +\frac{\alpha_t}{2c_t}\Vert \nabla_\theta J_t(\theta_t)\Vert^2 \nonumber
\end{align}
Since
\begin{align*}
&\Vert \theta_{t+1}-\theta^{\star}_{t+1}\Vert^2
\leq \\
&\Vert \theta_{t+1}-\theta^{\star}_{t}\Vert^2 +
\Vert \theta^{\star}_{t+1}-\theta^{\star}_{t}\Vert^2+
2\Vert \theta^{\star}_{t+1}-\theta^{\star}_{t}\Vert \Vert \theta_{t+1}-\theta^{\star}_t\Vert
\end{align*}
one has
\begin{align*}
&-\Vert \theta_{t+1}-\theta^{\star}_{t}\Vert^2 \leq \\
&\Vert \theta^{\star}_{t+1}-\theta^{\star}_{t}\Vert^2+
2\Vert \theta^{\star}_{t+1}-\theta^{\star}_{t}\Vert \Vert \theta_{t+1}-\theta^{\star}_t\Vert-\Vert \theta_{t+1}-\theta^{\star}_{t+1}\Vert^2 
\end{align*}
which, when applied to (\ref{eqn:proofstep2.5}), yields
\begin{align}\label{eqn:thm.proof.stp3}
&J_t(\theta_t)-J_t(\theta^{\star}_t)\leq \\
&\frac{\alpha_t}{2c_t}\Vert \nabla_\theta J_t(\theta_t)\Vert^2 +\nonumber\\
&\frac{c_t}{2\alpha_t}
    \left(\Vert \theta_t -\theta^{\star}_t \Vert^2 -
    \Vert \theta_{t+1}-\theta^{\star}_{t+1}\Vert^2\right) +\nonumber\\
&\frac{c_t}{2\alpha_t}\left(\Vert \theta^{\star}_{t+1}-\theta^{\star}_{t}\Vert^2+
2\Vert \theta^{\star}_{t+1}-\theta^{\star}_{t}\Vert \Vert \theta_{t+1}-\theta^{\star}_t\Vert\right) \nonumber
\end{align}
where we have used $c_t>0$. Summing both sides of (\ref{eqn:thm.proof.stp3}) for $t=0,1,\dots,t_f$ and using the upper bound $\Vert\nabla_{\theta}J_t(\theta_t)\Vert=c_t\Vert g_t\Vert\leq G^2$ therein results
\begin{align}\label{eqn:thm.proof.stp4}
&\sum_{t=0}^{t_f}J_t(\theta_t)-\sum_{t=0}^{t_f} J_t(\theta^{\star}_t)\leq  G^4\sum_{t=0}^{t_f}\frac{\alpha_t}{2c_t}+\\
&\frac{c_0}{2\alpha_0}\Vert \theta_0-\theta^{\star}_0\Vert^2 +
\sum_{t=0}^{t_f} \left( \frac{c_{t+1}}{2\alpha_{t+1}}-\frac{c_{t}}{2\alpha_{t}}\right)\Vert \theta_{t+1}-\theta^{\star}_{t+1}\Vert^2 +\nonumber \\
& \sum_{t=0}^{t_f}\frac{c_t}{2\alpha_t}\Vert \theta^{\star}_{t+1}-\theta^{\star}_{t}\Vert\left(\Vert \theta^{\star}_{t+1}-\theta^{\star}_{t}\Vert+
2 \Vert \theta_{t+1}-\theta^{\star}_t\Vert\right). \nonumber
\end{align}
Since
$$
\Vert \theta_{t}-\theta^{\star}_{t}\Vert,
\Vert \theta^{\star}_{t+1}-\theta^{\star}_{t}\Vert,
\Vert \theta_{t+1}-\theta^{\star}_{t}\Vert\leq \Vert\Theta\Vert
$$
where $\Vert\Theta\Vert<\infty$ is the diameter of the bounded set $\Theta$, it holds from (\ref{eqn:thm.proof.stp4}) that
\begin{align}\label{eqn:thm.proof.stp4.5}
&\sum_{t=0}^{t_f}J_t(\theta_t)-\sum_{t=0}^{t_f} J_t(\theta^{\star}_t) \nonumber\leq  G^4\sum_{t=0}^{t_f}\frac{\alpha_t}{2c_t}+\\
&\Vert \Theta\Vert^2 \frac{c_0}{2\alpha_0} +
\Vert \Theta\Vert^2\sum_{t\in\mathcal{T}} \left( \frac{c_{t+1}}{2\alpha_{t+1}}-\frac{c_{t}}{2\alpha_{t}}\right) +\nonumber \\
& 3\Vert\Theta\Vert\sum_{t=0}^{t_f}\frac{c_t}{2\alpha_t}\Vert \theta^{\star}_{t+1}-\theta^{\star}_{t}\Vert \nonumber\\
&\leq\frac{G^4}{2}\sum_{t=0}^{t_f}\frac{\alpha_t}{c_t}
+\Vert \Theta\Vert^2 \frac{c_0}{2\alpha_0} +\frac{\Vert \Theta\Vert^2}{2}\sum_{t\in\mathcal{S}}\frac{c_{t+1}}{\alpha_{t+1}} \nonumber\\
&+\frac{3\Vert\Theta\Vert}{2}\sum_{t=0}^{t_f}\frac{c_t}{\alpha_t}\varepsilon_t
\end{align}
where $\mathcal{T}=\lbrace t\in[0,t_f]\vert c_{t+1}/\alpha_{t+1}> c_{t}/\alpha_{t}\rbrace$ and $\mathcal{S}$ is a finite subset of $\mathcal{T}$.
With a large enough $t_f$, there exist some $\tau \geq 0$ and $M_0>0$ such that $\varepsilon_t\leq M_0t^{-\mu_2}$ holds for $\tau <t\leq t_f$. Therefore, by using the step size (\ref{eqn:eqn_step_size}) in (\ref{eqn:thm.proof.stp4.5}) we obtain
\begin{align*}
&\sum_{t=0}^{t_f}J_t(\theta_t)-\sum_{t=0}^{t_f} J_t(\theta^{\star}_t)\leq
\frac{G^4}{2\epsilon}\left(1+\sum_{t=1}^{t_f} t^{-\mu_1}\right)\\
&+\Vert\Theta\Vert^2\frac{G}{2}\bigl( 1+\vert\mathcal{S}\vert(t_f+1)^{\mu_1} \bigr)\nonumber\\
&+\Vert\Theta\Vert\frac{3G}{2}
\left(
\sum_{t=0}^{\tau} \frac{\varepsilon_t}{\alpha_t}
+M_0 \sum_{t=\tau+1}^{t_f} t^{\mu_1-\mu_2}
\right)
\end{align*}
where $\vert\mathcal{S}\vert<\infty$ is the cardinality of the set $\mathcal{S}$. Therefore when $\mu_2\neq 1+\mu_1$, the regret of Algorithm~\ref{alg} satisfies
\begin{align}\label{eqn:thm.proof.stp5}
&\mathcal{R}_{t_f}\leq \frac{G^4}{2\epsilon}
\left(2+\int_1^{t_f} t^{-\mu_1}dt \right)+\Vert\Theta\Vert^2\frac{G}{2}\bigl( 1+\vert\mathcal{S}\vert(t_f+1)^{\mu_1} \bigr)\nonumber\\
&+\Vert\Theta\Vert\frac{3G}{2}
\left(\sum_{t=0}^{\tau} \frac{\varepsilon_t}{\alpha_t} + M_0 (\tau+1)^{\mu_1-\mu_2}\right)\nonumber\\
&+M_0\Vert\Theta\Vert\frac{3G}{2}\int_{\tau+1}^{t_f} t^{\mu_1-\mu_2}dt \nonumber\\
&\leq M_1 + M_2 t_f^{1-\mu_1} + M_3{t_f}^{\mu_1}
+M_4 t_f^{1+\mu_1-\mu_2}
\end{align}
for some positive constants $M_1$, $M_2$, $M_3$ and $M_4$. When $\mu_2= 1+\mu_1$, on the other hand, a similar calculation yields
\begin{equation}\label{eqn:thm.proof.stp6}
\mathcal{R}_{t_f}\leq  M_1 + M_2 t_f^{1-\mu_1} + M_3{t_f}^{\mu_1}
+M_4 \log(t_f)
\end{equation}
In either case, both inequalities (\ref{eqn:thm.proof.stp5}) and (\ref{eqn:thm.proof.stp6}) imply
$$
\mathcal{R}_{t_f}\leq M{t_f}^{\mu^{\star}}
$$
for a large enugh $t_f>0$ and some $M>0$.
\end{proof}

Theorem~\ref{thm:regret} asserts that the regret of Algorithm~\ref{Alg:BangRide} is sublinear when $\mu^{\star}<1$. The minimum order of regret provable by Theorem~\ref{thm:regret} is
\begin{align*}
    \mu^{\star} &= \underset{\mu_1}{\min} \max \{ 1-\mu_1, \mu_1, 1+\mu_1-\mu_2\} \\
    &= \begin{cases}
        1-\mu_2/2 & \mbox{ if } \mu_2\in (0,1)\\
        1/2 &\mbox{ if } \mu_2\geq 1
    \end{cases}
\end{align*}
attained for $\mu_1=\min\{1/2, \mu_2/2\}$.
If $\mu_2\geq 1$, this agrees with the order of regret $\mu^{\star}=1/2$ in~\cite{Zink2003}.
When the summand in (\ref{eqn:regret}) has a limit,
a sublinear regret implies convergence to an optimal solution, that is,
\begin{equation}\label{eqn:limJ=0}
    \lim_{t\to +\infty} J_t(\theta_t)-J_t(\theta^{\star}_t)=0.
\end{equation}



\section{NUMERICAL EXPERIMENTS}\label{sec:numeric}
In this section, we evaluate our model-free charging method on a few common battery fast-charging problems.

\subsection{SPMeT cell}\label{ex:SPMeT}
We first consider the nonlinear (discretized) single-particle battery model augmented by electrolyte and temperature states (SPMeT). The input is the applied current $u$, the output is the terminal voltage $V$ and state of charge (SOC) and the state vector $x$ is 
\begin{align}
x =
    \begin{bmatrix}
         c^{a, -} & c^{s, -} & c^{e, -} & c^{e, +} & T
    \end{bmatrix}
\end{align}
where $c^{a, -}$ and $c^{s, -}$ are the average and surface concentration of lithium in the negative particle, respectively, $c^{e, \pm}$ is the electrolyte lithium-ion concentration and $T$ the temperature. It is assumed that the diffusion in the positive particle is infinitely fast such that its concentration is obtained from the mass balance. The particle average and surface concentration are described by an equivalent hydraulic model as in~\cite{RCGKG2019}. Diffusion of lithium-ions in the electrolyte is modeled by a Padé approximated transfer function as in~\cite{YJYWZ2017}. The resulting model is

\begin{align}\label{eqn:SPMeT}
c^{a,-}_{t+1}&=c^{a,-}_t+\frac{\Delta t}{V_p^-F}u_t \\ \nonumber
c_{t+1}^{s,-} &= \frac{\mathcal{G} \Delta t}{\beta(1-\beta)\tau^-}c^{a,-}_t + \left(1-\frac{\mathcal{G}\Delta t}{\beta(1-\beta)\tau^-} \right)c_{t}^{s,-}\\\nonumber
&\quad+\frac{\Delta t}{V_p^-F(1-\beta)}u_t\\ \nonumber
c^{e, -}_{t+1} &= c^{e, -}_{t} + \frac{\Delta t D^- N^-_1}{L^-\varepsilon^-N_3^-} (c^{e, -}_0-c^{e, -}_{t})+\frac{\Delta t N^-_2}{V_e^-FN_3^-} u_t \\\nonumber 
c^{e, +}_{t+1} &= c^{e, +}_{t} + \frac{\Delta t D^+ N^+_1}{L^+\varepsilon^+N_3^+} (c^{e, +}_0-c^{e, +}_{t})+\frac{\Delta t N^+_2}{V_e^+FN_3^+} u_t \\\nonumber 
T_{t+1}&=T_t- a \Delta t (T_t-T_a)\\\nonumber
    &\quad+b \Delta t
    (\Delta\eta(u_t, x_t) + \Delta \Phi_e(u_t, x_t)) u_t\\\nonumber
V_t &= \Delta U({x}_t) + \Delta\eta(u_t, x_t) + \Delta \Phi_e(u_t, x_t)\\\nonumber
SOC_t &= \frac{c^{a,-}_t/c^-_{\rm max}-\theta_1}{\theta_2-\theta_1}\nonumber
\end{align}

where $\Delta t$ is the discrete time step, $V_p$ aggregated particle volume, $F$ Faraday constant, $\tau$ solid diffusion time constant, $L$ electrode thickness, $\varepsilon$ porosity, $D^{\pm}$ diffusion coefficient, $V_e$ electrolyte volume, $c^-_{\rm max}$ maximum concentration and $\theta$ stoichiometric points defining SOC. Refer to \cite{RCGKG2019,YJYWZ2017} for computation of constants $\mathcal{G}$, $\beta$, $N_1^{\pm}$, $N_2^{\pm}$, $N_3^{\pm}$. 
Lastly, $\Delta U$, $\Delta \eta$ and $\Delta \Phi_e $ are non-linear functions that describe differences in open circuit potentials, overpotentials, and electrolyte potentials between the positive and negative electrodes. We consider a charging horizon of $t_f=3000$ with the initial conditions
\begin{align*}
&c^{a,-}_0=c^{s,-}_0=0.1 c_{max}^-,\\
&T_0=25,\\
&c^{e, -}_0 =c^{e, +}_0=1200,
\end{align*}
while enforcing the following bounds on the charging current and the terminal voltage
\begin{align}\label{eqn:constr_SPMet}
    u_t&\leq 2Q=56.3739, \nonumber\\
    V_t&\leq 4.2 
\end{align}
where $Q$ is the capacity. The objective is to maximize SOC which is equivalent to
$$\sum_{t=0}^{t_f} c^{a,-}_t,$$
as $c^{a,-}$ has a monotone relation with SOC according to (\ref{eqn:SPMeT}). One may implement the ideal bang-ride protocol (\ref{eqn:selector}), by solving the constraint equations (\ref{eqn:constraint_equation}) analytically using the model (\ref{eqn:SPMeT}). However, this is not possible, when the battery model is not available. In this case, one can use Algorithm~\ref{alg} to obtain the bang-ride charging current, by assuming a simple PI controller structure as follows
\begin{equation}\label{eqn:pi}
u_t=C(\mathcal{H}_t,\theta_t)=[\theta_t]_1 e_{i^{\star}(t-1),t-1} + [\theta_t]_2\sum_{k=0}^{t-1} e_{i^{\star}(k),k}
\end{equation}
where $[\theta_t]_1=K_p$ and $[\theta_t]_2=K_i$ are the PI gains. The error weights $\gamma_i$ in (\ref{eqn:cnstr_error}) are equally chosen, \emph{i.e.} $\Gamma=\textnormal{diag}(\gamma)=I$. In Figure~\ref{fig:SPMeT}, the resulting charging current, terminal voltage, temperature, and the state-of-charge are compared with those obtained from an ideal bang-ride solution derived from the battery model analytically. The resulting charging profile is of CCCV form, as can be seen in Figure~\ref{fig:SPMeT}. 


\begin{figure}
     \centering
     \begin{subfigure}[b]{0.475\textwidth}
        \centering
        \includegraphics[width=1\linewidth]{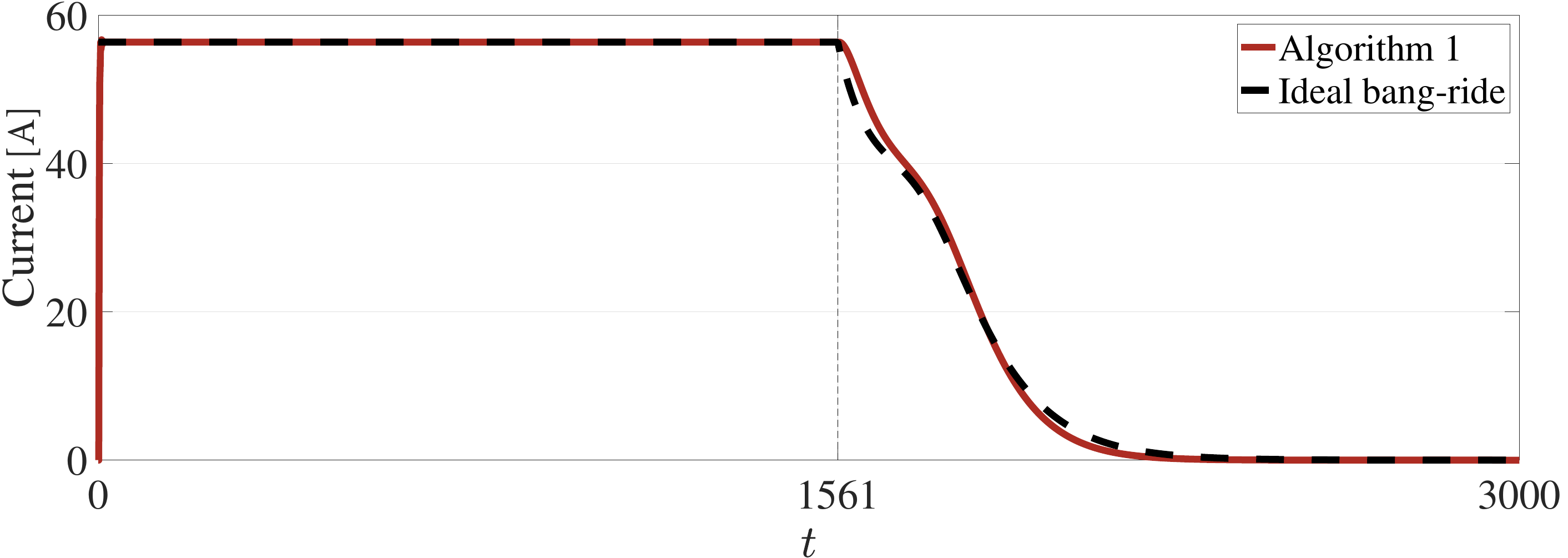}
	    \caption{Charging current over time in Example~\ref{ex:SPMeT}.}
         \label{fig:SPMet_i}
     \end{subfigure}
     \hfill
     \begin{subfigure}[b]{0.49\textwidth}
         \centering
        \includegraphics[width=1\linewidth]{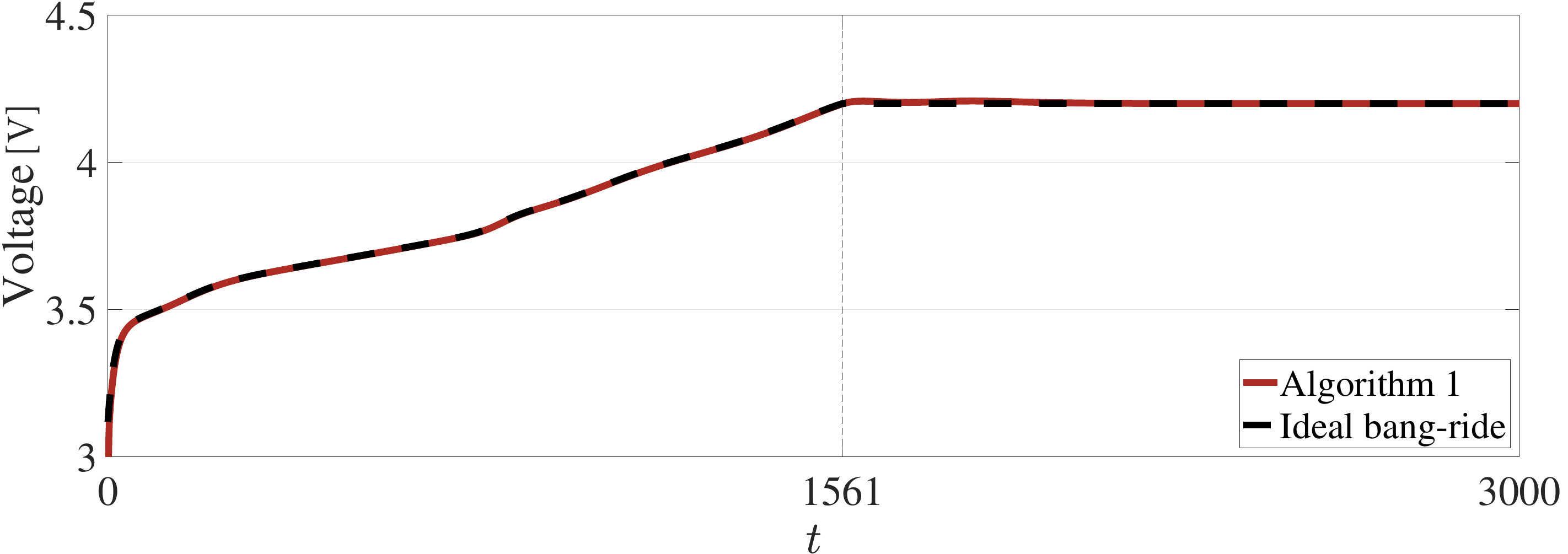}
	    \caption{Terminal voltage over time in Example~\ref{ex:SPMeT}.}
         \label{fig:SPMeT_v}
     \end{subfigure}
    \hfill
     \begin{subfigure}[b]{0.49\textwidth}
         \centering
        \includegraphics[width=1\linewidth]{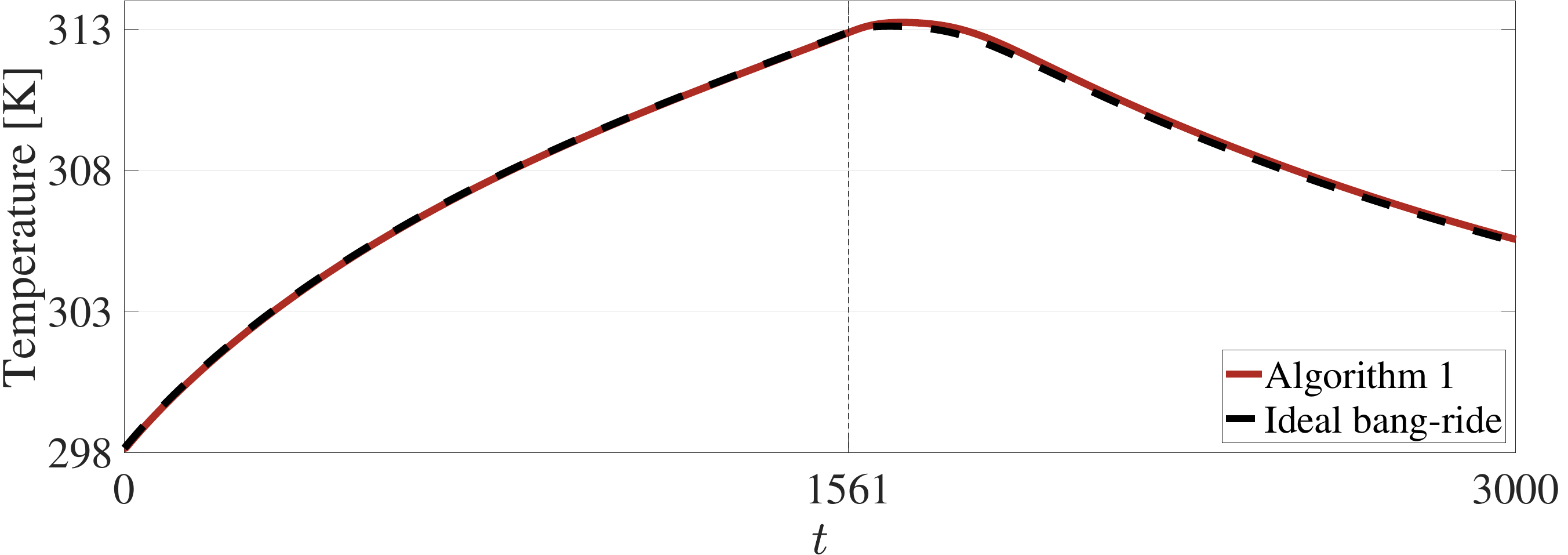}
	    \caption{Temperature over time in Example~\ref{ex:SPMeT}.}
         \label{fig:SPMeT_t}
     \end{subfigure}
    \hfill
     \begin{subfigure}[b]{0.49\textwidth}
         \centering
        \includegraphics[width=1\linewidth]{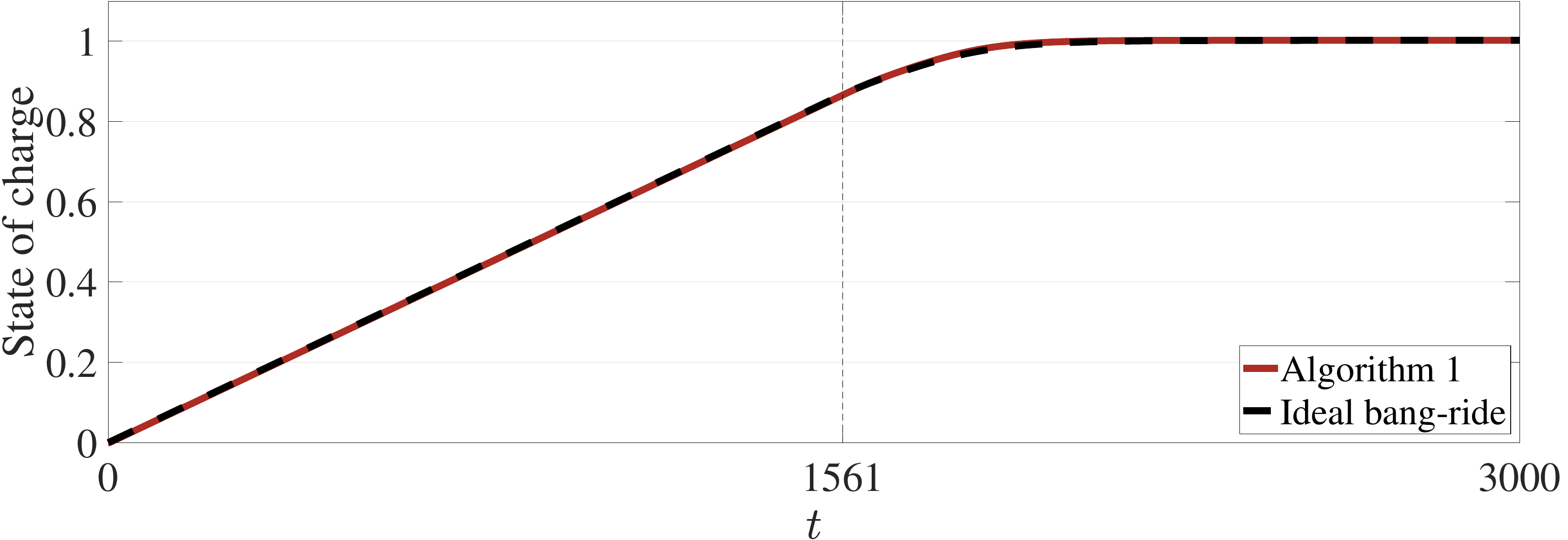}
	    \caption{State of charge over time in Example~\ref{ex:SPMeT}.}
         \label{fig:SPMeT_soc}
     \end{subfigure}
        \caption{The battery is charged with the maximum current until the terminal voltage constraint in (\ref{eqn:constr_SPMet}) is activated at $t=1561$, after which, the battery is charged with a constant voltage in Example~\ref{ex:SPMeT}.}
        \label{fig:SPMeT}
\end{figure}

\subsection{ECM cell}\label{ex:ECM1}
In the next example, we consider a continuous-time equivalent circuit battery model (ECM) with two time constants, augmented with the lumped thermal model~\cite{Hu2019}. This simple model is used to evaluate our algorithm against the bang-ride charging protocols computed from a large number of randomly generated perturbed models. We use the notation $u$ for the charging current, $T$ for the temperature, $SOC$ for the state of charge, and $v_1$, $v_2$ for the voltages across the (abstract) RC-links. This model is described by the following ordinary differential equations:
\begin{align}\label{eqn:ECM1}
\dot{v_1}(t)&=-\frac{1}{R_1C_1}v_1(t)+\frac{1}{C_1}u(t) \nonumber\\
\dot{v_2}(t)&=-\frac{1}{R_2C_2}v_2(t)+\frac{1}{C_2}u(t) \nonumber\\
\dot{SOC}(t)&=\frac{1}{Q}u(t) \nonumber\\
\dot{T}(t) &= - a (T(t) - T_a)+ b u(t)v_d(t),
\end{align}
where $v_d(t)=R_o u(t) + v_1(t)+v_2(t)$ is called the dynamic voltage. The battery model (\ref{eqn:ECM1}) can be written as a nonlinear monotone dynamic system of the form (\ref{eqn:system}) after discretization with an appropriate sampling time $\Delta t$. In this model, the measurable signals are the input current, terminal voltage, and the cell temperature, based on which we define the following outputs
\begin{align*}
    h_1(x_t,u_t)&=u_t,\\
    h_2(x_t,u_t)&=Kx_t+u_t,\\
    h_3(x_t,u_t)&=T_{t+1}-T_a \\
    &=[x_t]_4 (1-a\Delta t) + b\Delta t C x_t u_t+ b\Delta tR_0 u_t^2
\end{align*}
where $C,K \in\mathbb{R}^{1\times 4}$ are constant, $[x_t]_3$ is the state of charge, and $[x_t]_4=T_t-T_a$ is the temperature variation of the cell with respect to ambient at time $t$. Now consider the fast charging problem (\ref{eqn:optimization}) with $L(x_t,u_t)=[x_t]_3$. In addition to current and voltage constraints, we impose a constraint on the cell temperature. This additional constraint is known to help with limiting degradation and sustaining the negative electrode performance~\cite{Palac2018}. These constraints can be written as $y_{i,t}\leq \bar{y}_i$ where
\begin{align*}
\bar{y}_1&= u_{\rm max}, \\
\bar{y}_2&=v_{\rm max}, \\ 
\bar{y}_3&=T_{\rm max}-T_{a}
\end{align*}
We consider a simple PI controller as (\ref{eqn:pi}) in Algorithm~\ref{alg}. The charging current, dynamic voltage, and the cell temperature following the model-free charging algorithm provided in this paper are plotted in Figure~\ref{fig:ECM1} (solid red and blue), along with those of the ideal bang-ride charging protocol (dashed black). The charging current from Algorithm~\ref{alg} is computed online based on the input-output data, whereas the bang-ride charging currents are obtained analytically, by solving the constraint equations (\ref{eqn:constraint_equation}) and selecting the minimum state-feedback control signals (\ref{eqn:selector}). We also computed the ideal bang-ride currents for $1000$ different random ECMs whose parameters differed from the true values by a maximum of $10\%$. The resulting currents, dynamic voltages, and temperatures of the true battery under these protocols are plotted in solid gray. The bang-ride protocols computed from these perturbed models can violate the constraints or lead to sub-optimal operation. In contrast, Algorithm~\ref{alg} tracks the ideal bang-ride protocol without having access to the (even inaccurate) model of the battery (see Figure~\ref{fig:ECM1}).

As can be seen in Figure~\ref{fig:ECM1}, when the error signals (\ref{eqn:cnstr_error}) are not scaled properly ($\Gamma=I$) the model-free approach is slow in tracking the ideal bang-ride charging protocol once the temperature constraint is switched on. The reason is that the temperature dynamics are substantially slower than the voltage or the current dynamics and when it is switched on, the controller gains $\theta$ struggle to adapt to the abrupt change. As can be seen in Figure~\ref{fig:scheme}, this problem is easily treated by scaling the measured errors so that they are roughly in the same order of magnitude as follows
\begin{equation}\label{eqn:gamma=500}
    \Gamma=\begin{bmatrix}
    1&0&0\\
    0&1&0\\
    0&0&500
\end{bmatrix}
\end{equation}

\begin{figure}
     \centering
     \begin{subfigure}[b]{0.49\textwidth}
        \centering
        \includegraphics[width=1\linewidth]
        {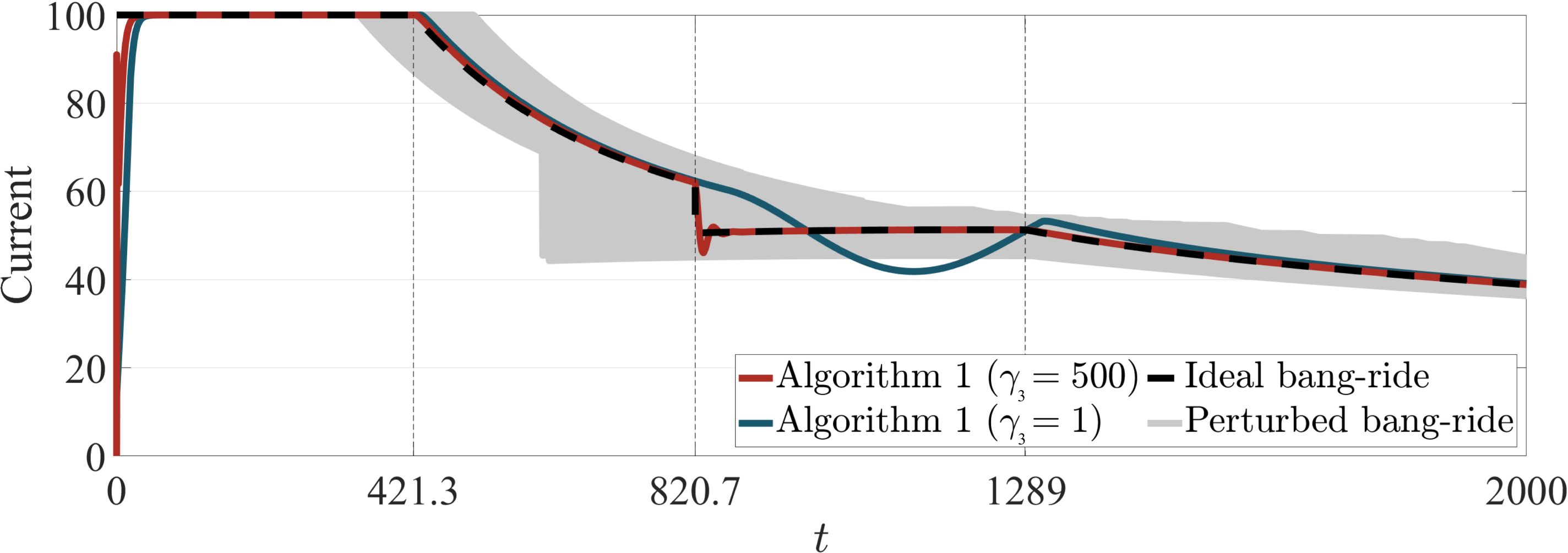}
	    \caption{Charging current over time in Example~\ref{ex:ECM1}.}
         \label{fig:ECM_i}
     \end{subfigure}
     \hfill
     \begin{subfigure}[b]{0.49\textwidth}
         \centering
        \includegraphics[width=1\linewidth]
        {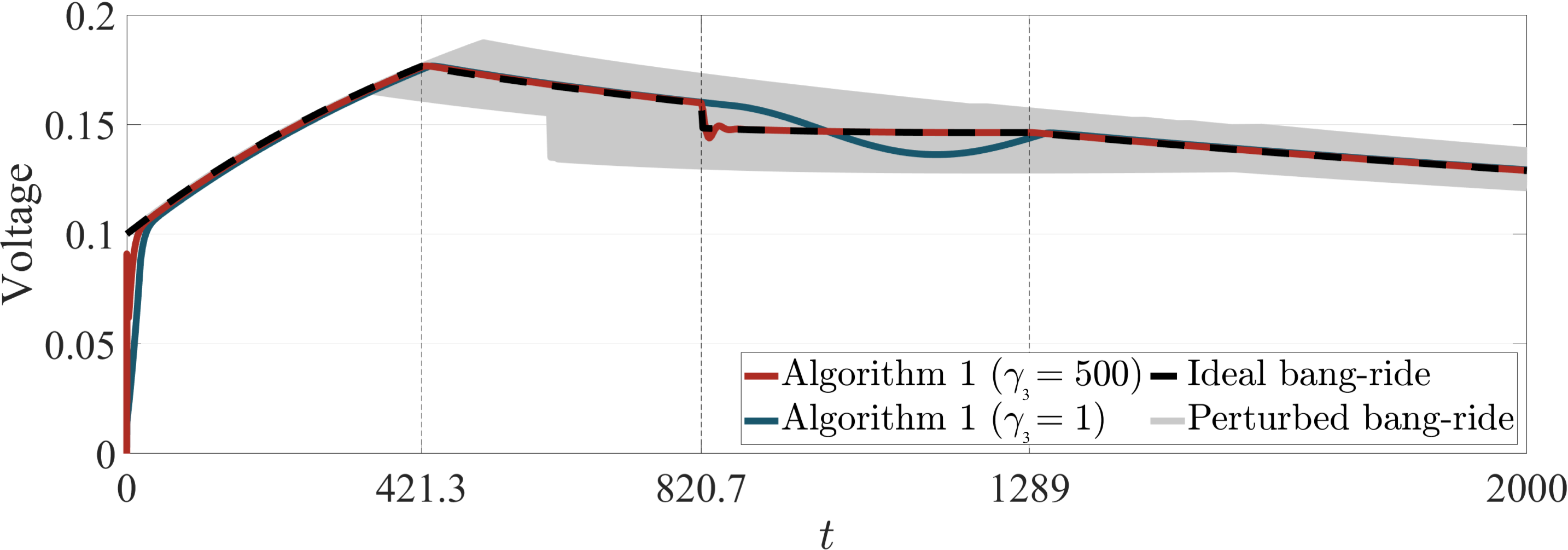}
	    \caption{Dynamic voltage over time in Example~\ref{ex:ECM1}.}
         \label{fig:ECM_v}
     \end{subfigure}
    \hfill
    \begin{subfigure}[b]{0.49\textwidth}
         \centering
        \includegraphics[width=1\linewidth]
        {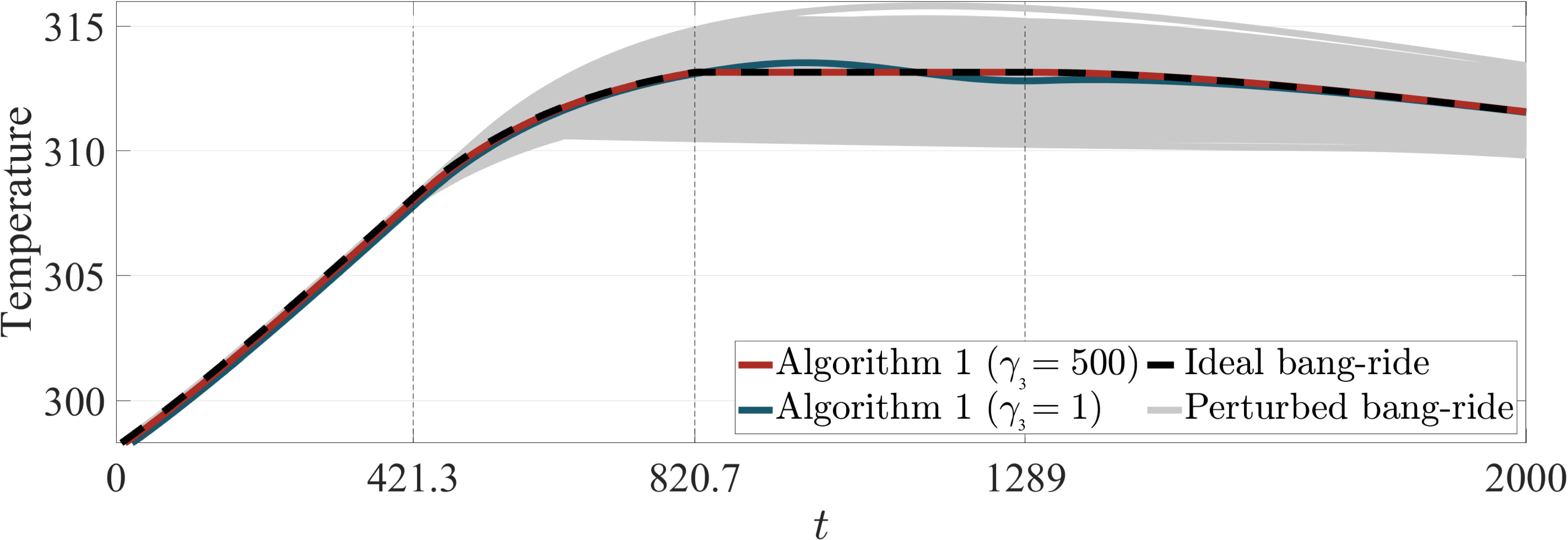}
	    \caption{Temperature over time in Example~\ref{ex:ECM1}.}
    \label{fig:ECM_t}
    \end{subfigure}
        \caption{The active constraint changes three times in Example~\ref{ex:ECM1}. In the beginning, the current constraint ($y_{1,t}= u_{\rm max}$) is active to charge with the maximum current until $t=421.3$, when the current constraint is switched to the voltage constraint. The voltage constraint is active until $t=820.7$ before the active constraint is switched to the temperature constraint. Finally, the active constraint is switched back to the voltage constraint at $t=1289$ which remains active until the end of the experiment.}
        \label{fig:ECM1}
\end{figure}

\subsection{ECM pack}\label{ex:ECM2}
As the final experiment, we consider a pack of 100 battery cells each described by an augmented ECM (\ref{eqn:ECM1}), but with modified temperature dynamics to account for the heat exchanged between adjacent cells in the pack as follows~\cite{Pozzi2022}
\begin{align}\label{eqn:temp_transfer}
    \dot{T}_i(t) &= - a (T_i(t) - T_a)\\
    &+ b u(t)\left( R_o u(t) + v_{i,1}(t)+v_{i,2}(t)\right) \nonumber\\
&+k_1(T_{i-1}(t)-T_{i}(t))+k_2(T_{i+1}(t)-T_{i}(t)) \nonumber
\end{align}
where $k_1$ and $k_2$ are some positive constants and $T_i$ is the temperature of the $i$-th cell. For convenience, when $i=1$ and $i=100$, we take $T_0(t)=T_{100}(t)$ and $T_{101}(t)=T_1(t)$ in (\ref{eqn:temp_transfer}) respectively. The parameters of the RC links in the equivalent circuit models are slightly varied between the cells. The cells are assumed to be connected in series, so by Kirchhoff laws, the same input current $u(t)$ goes through all of them and the overall pack terminal voltage equals the sum of the terminal voltages of all cells as follows
$$
V(t)=V_1(t)+V_2(t)+\cdots +V_{100}(t)
$$
where $V_i(t)=v_{i,1}(t)+v_{i,2}(t)+R_{0}u(t)+v_{\rm ocv}$ is the terminal voltage of the $i$-th cell and $v_{\rm ocv}$ is the cell open-circuit voltage. This model can be discretized with an appropriate sampling time $\Delta t$ to obtain the battery pack model (\ref{eqn:system}), with the measurable outputs defined as follows
\begin{align*}
    h_i(x_t,u_t)&=u_t,& i=1\\
    h_i(x_t,u_t)&=Kx_{i,t}+u_t,&  2\leq i\leq 101 \\
    h_i(x_t,u_t)&=[x_{i,t}]_4 (1-a\Delta t)&\\
    &+b\Delta t C x_{i,t} u_t + b\Delta tR_0 u_t^2 &\\
    &+\Delta t k_1([x_{i-1,t}]_4-[x_{i,t}]_4)&\\
    &+\Delta t k_2 ([x_{i+1,t}]_4-[x_{i,t}]_4), & 102\leq i\leq 201 \\
    h_i(x_t,u_t)&=h_j(x_t,u_t)-h_k(x_t,u_t), & 202\leq i\leq 10201
\end{align*}
where the last $10000$ output functions are constructed by the differences between the temperature variations of any two cells within the pack, \emph{i.e.}, $j,k\in\lbrace 102, 103,\dots, 201\rbrace$.
We consider the fast charging problem (\ref{eqn:optimization}) with the total state of charge $$L(x_t,u_t)=\sum_{i=1}^{100} [x_{i,t}]_3.$$ In addition to the constraints on every cell voltage and temperature, we also impose a constraint on the differences between the cells' temperatures. The purpose of this constraint is to achieve a uniform degradation of the cells. All the constraints can be written in the form $y_{i,t}\leq \bar{y}_i$, where
\begin{align*}
\bar{y}_i&= u_{\rm max},& i&=1\\
\bar{y}_i&=v_{\rm max}, & i&=2,3,\dots,101\\
\bar{y}_i&=T_{\rm max}-T_{a}, & i&=102,103,\dots,201 \\
\bar{y}_i&= \Delta T_{\rm max}, & i&=202,203,\dots,10201 
\end{align*}
and $\Delta T_{\rm max}=5$ is the maximum allowable temperature difference between any two cells in the pack. The error weights in (\ref{eqn:cnstr_error}) are chosen as
\begin{align*}
\gamma_i&= 1, & i&=1,3,\dots,101\\
\gamma_i&=500, & i&=102,103,\dots,10201. 
\end{align*}
Similar to Examples~\ref{ex:SPMeT} and \ref{ex:ECM1}, we consider a simple PI controller structure as (\ref{eqn:pi}) in Algorithm~\ref{alg}. The resulting charging current, the overall pack voltage, and the maximum and minimum temperatures among the cells in the pack are shown in Figure~\ref{fig:ECM2}.

\begin{figure}
     \centering
     \begin{subfigure}[b]{0.475\textwidth}
        \centering
        \includegraphics[width=1\linewidth]{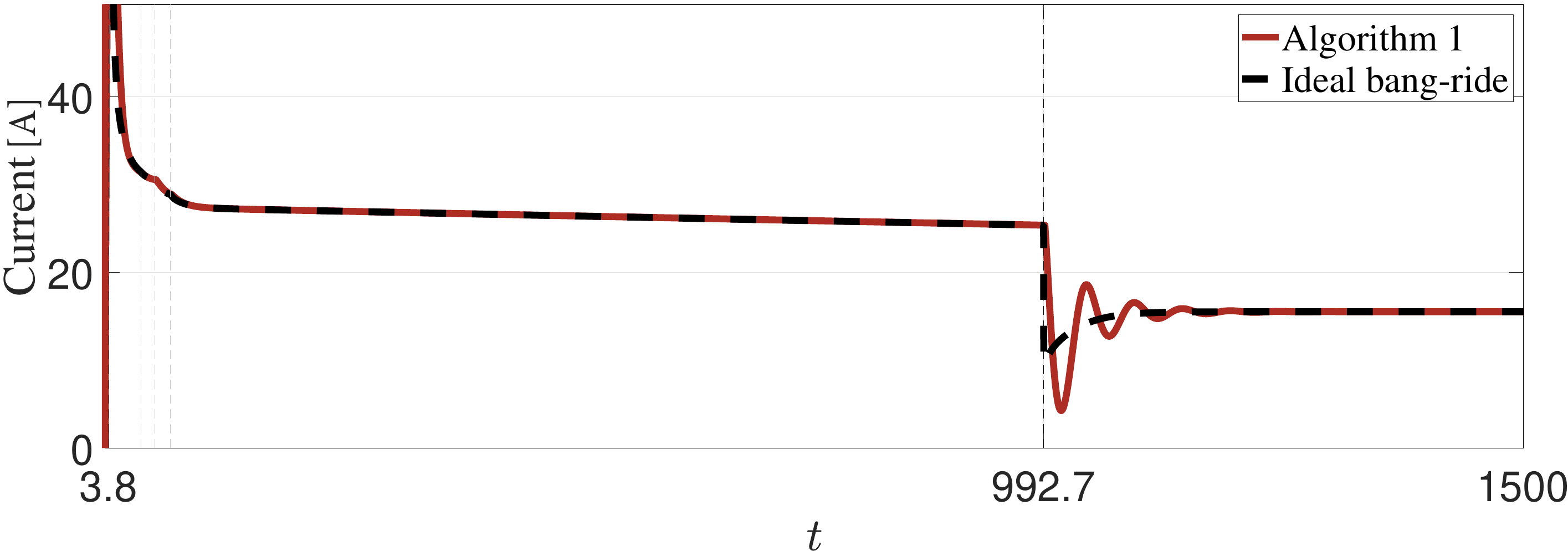}
	    \caption{Charging current over time in Example~\ref{ex:ECM2}.}
         \label{fig:ECM2_i}
     \end{subfigure}
     \hfill
     \begin{subfigure}[b]{0.49\textwidth}
         \centering
        \includegraphics[width=1\linewidth]{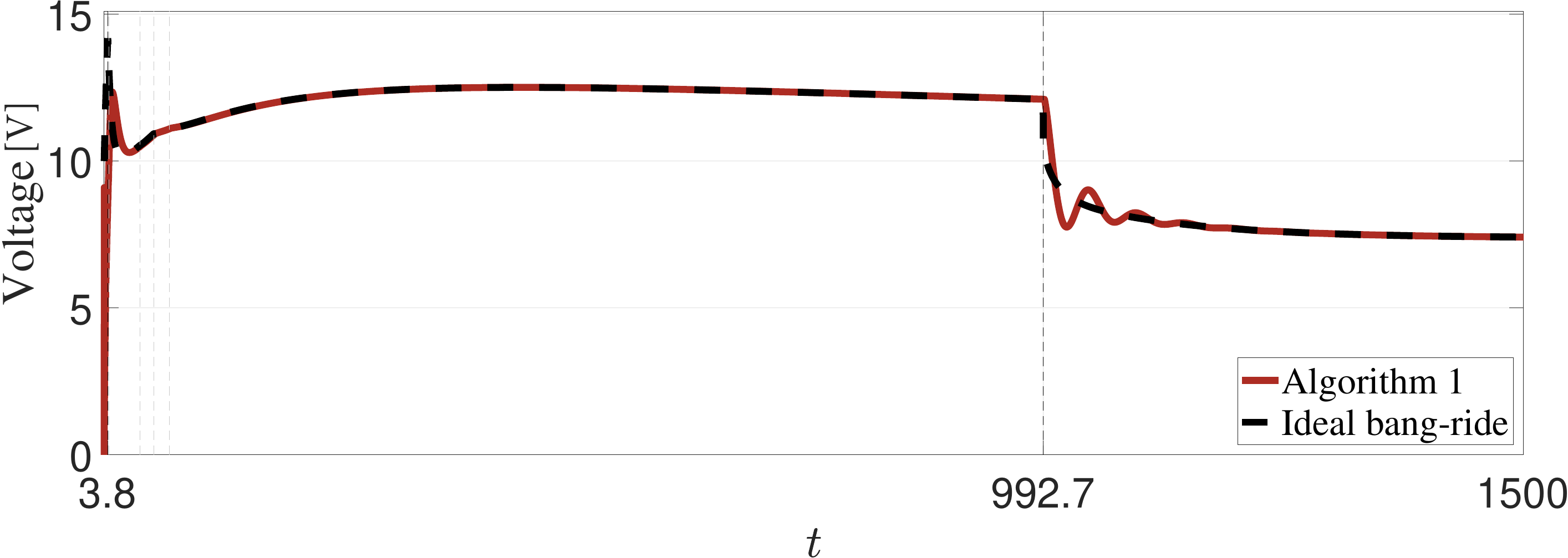}
	    \caption{Terminal voltage over time in Example~\ref{ex:ECM2}.}
         \label{fig:ECM2_v}
     \end{subfigure}
    \hfill
    \begin{subfigure}[b]{0.49\textwidth}
         \centering
        \includegraphics[width=1\linewidth]{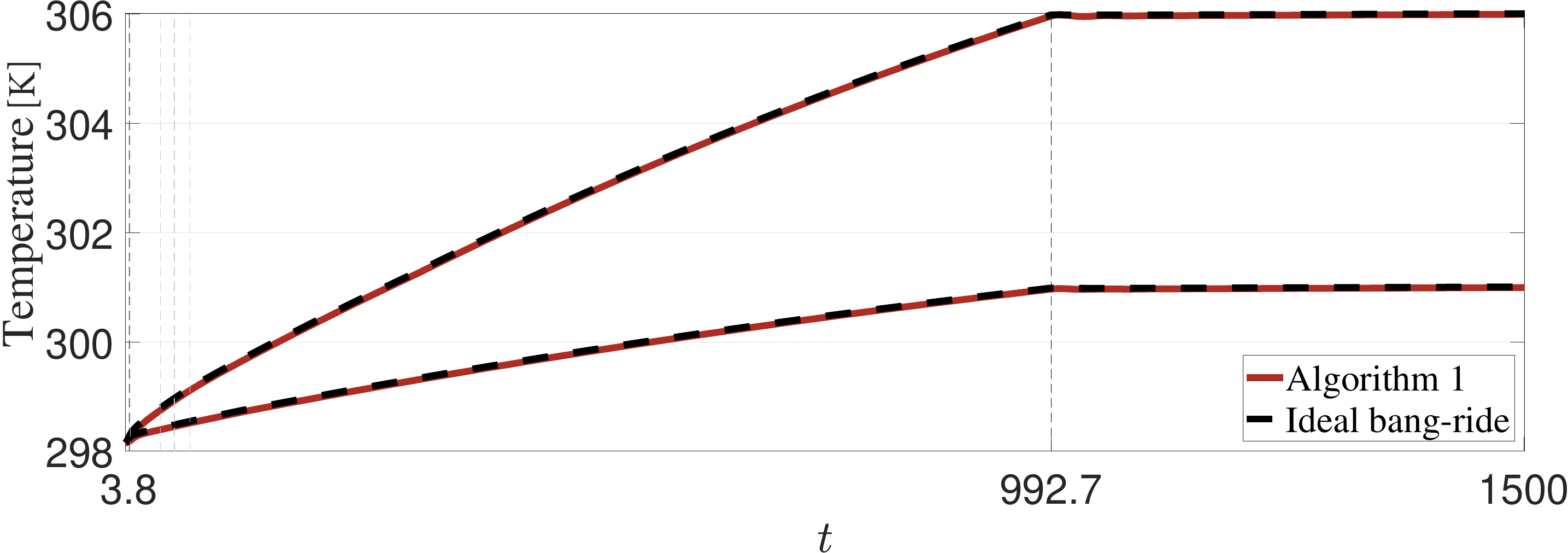}
	    \caption{Temperature over time in Example~\ref{ex:ECM2}.}
    \label{fig:ECM2_t}
    \end{subfigure}
        \caption{Under the ideal bang-ride charging approach in Example~\ref{ex:ECM2}, the battery is charged with the maximum current $u_t=u_{\rm max}$ until $t=3.8$ when a cell voltage constraint is activated. After that, the active voltage constraint is exchanged several times among different cells in the pack until $t=992.7$ when it is handed over to a temperature difference constraint. The last activated constraint maintains a difference of $\Delta T_{\rm max}$ between the cells' temperatures in the pack.}
        \label{fig:ECM2}
\end{figure}

\section{CONCLUSION}\label{sec:conclusion}
We considered a general battery fast charging problem with various constraints in this paper. We proposed a model-free algorithm that converges to the ideal bang-ride protocol, by observing the charging data without accessing the battery model or its states. The numerical examples demonstrate that the proposed model-free approach can be more effective than first identifying an (imperfect) model for the battery and then computing its optimal charging profile.

In future work, it would be interesting to investigate if it is possible to ensure that the proposed algorithm stays in the feasible region at all times, satisfying the desired charging constraints. Since any overshoot in the data-driven closed-loop system (Figure (\ref{fig:scheme})) as $e_{i^{\star},t}<0$ results in a violation of the active constraint, developing data-driven output-feedback controllers that can track a reference with no overshoots would provide constraint satisfaction guarantees for our data-driven algorithm. However, this can potentially lead to overly cautious charging strategies, which would result in large regrets. This is the case when the closed-loop system in Figure~(\ref{fig:scheme}) has a slow response. Therefore, the model-free charging method proposed in this paper can benefit from developing a data-driven controller design that can achieve a non-overshooting closed-loop response with a fast decay rate for a system with unknown dynamics.

\printbibliography

@article{DCHCG2023,
  title={Constrained optimal control of monotone systems with applications to battery fast-charging},
  author={Drummond, Ross and Courtier, Nicola E and Howey, David A and Couto, Luis D and Guiver, Chris},
  journal={arXiv preprint arXiv:2301.07461},
  year={2023}
}

@article{SWHPW2018,
  title={Performance and cost of materials for lithium-based rechargeable automotive batteries},
  author={Schmuch, Richard and Wagner, Ralf and H{\"o}rpel, Gerhard and Placke, Tobias and Winter, Martin},
  journal={Nature Energy},
  volume={3},
  number={4},
  pages={267--278},
  year={2018},
  publisher={Nature Publishing Group UK London}
}

@article{TCFOLCJELLo2019,
  title={Lithium-ion battery fast charging: A review},
  author={Tomaszewska, Anna and Chu, Zhengyu and Feng, Xuning and Okane, Simon and Liu, Xinhua and Chen, Jingyi and Ji, Chenzhen and Endler, Elizabeth and Li, Ruihe and Liu, Lishuo and others},
  journal={ETransportation},
  volume={1},
  pages={100011},
  year={2019},
  publisher={Elsevier}
}

@article{NaS2021,
  title={Real-time optimal fast charging of Li-ion batteries with varying temperature and charging behaviour constraints},
  author={Nambisan, Praveen and Saha, Pankaj and Khanra, Munmun},
  journal={Journal of Energy Storage},
  volume={41},
  pages={102918},
  year={2021},
  publisher={Elsevier}
}

@article{LDCJRJS2022,
  title={Data-driven systematic parameter identification of an electrochemical model for lithium-ion batteries with artificial intelligence},
  author={Li, Weihan and Demir, Iskender and Cao, Decheng and J{\"o}st, Dominik and Ringbeck, Florian and Junker, Mark and Sauer, Dirk Uwe},
  journal={Energy Storage Materials},
  volume={44},
  pages={557--570},
  year={2022},
  publisher={Elsevier}
}

@article{JiW2021,
  title={Constrained bayesian optimization for minimum-time charging of lithium-ion batteries},
  author={Jiang, Benben and Wang, Xizhe},
  journal={IEEE Control Systems Letters},
  volume={6},
  pages={1682--1687},
  year={2021},
  publisher={IEEE}
}

@article{HLWJ2023,
  title={Adaptive Model-Based Reinforcement Learning for Fast Charging Optimization of Lithium-Ion Batteries},
  author={Hao, Yuhan and Lu, Qiugang and Wang, Xizhe and Jiang, Benben},
  journal={IEEE Transactions on Industrial Informatics},
  year={2023},
  publisher={IEEE}
}

@inproceedings{WaXL2019,
  title={Iterative learning control with optimal learning gain for recharging of Lithium-ion battery},
  author={Wang, Chaolun and Xiao, Tengfei and Li, Xiao-Dong},
  booktitle={2019 IEEE 8th Data Driven Control and Learning Systems Conference (DDCLS)},
  pages={1117--1121},
  year={2019},
  organization={IEEE}
}

@inproceedings{Zink2003,
  title={Online convex programming and generalized infinitesimal gradient ascent},
  author={Zinkevich, Martin},
  booktitle={Proceedings of the 20th international conference on machine learning (icml-03)},
  pages={928--936},
  year={2003}
}

@article{hazan2016,
  title={Introduction to online convex optimization},
  author={Hazan, Elad and others},
  journal={Foundations and Trends{\textregistered} in Optimization},
  volume={2},
  number={3-4},
  pages={157--325},
  year={2016},
  publisher={Now Publishers, Inc.}
}

@article{NoM2022,
  title={Online convex optimization for data-driven control of dynamical systems},
  author={Nonhoff, Marko and M{\"u}ller, Matthias A},
  journal={IEEE Open Journal of Control Systems},
  volume={1},
  pages={180--193},
  year={2022},
  publisher={IEEE}
}

@article{SiDPG2020,
  title={Time-varying convex optimization: Time-structured algorithms and applications},
  author={Simonetto, Andrea and DallAnese, Emiliano and Paternain, Santiago and Leus, Geert and Giannakis, Georgios B},
  journal={Proceedings of the IEEE},
  volume={108},
  number={11},
  pages={2032--2048},
  year={2020},
  publisher={IEEE}
}

@article{HoW2013,
  title={From model-based control to data-driven control: Survey, classification and perspective},
  author={Hou, Zhong-Sheng and Wang, Zhuo},
  journal={Information Sciences},
  volume={235},
  pages={3--35},
  year={2013},
  publisher={Elsevier}
}

@article{spall,
  title={Multivariate stochastic approximation using a simultaneous perturbation gradient approximation},
  author={Spall, James C},
  journal={IEEE transactions on automatic control},
  volume={37},
  number={3},
  pages={332--341},
  year={1992},
  publisher={IEEE}
}

@inproceedings{ift,
  title={A convergent iterative restricted complexity control design scheme},
  author={Hjalmarsson, H{\aa}kan and Gunnarsson, Svante and Gevers, Michel},
  booktitle={Proceedings of 1994 33rd IEEE conference on decision and control},
  volume={2},
  pages={1735--1740},
  year={1994},
  organization={IEEE}
}

@article{LVWWZTWZ2021,
  title={Electrochemical model-based fast charging: Physical constraint-triggered PI control},
  author={Li, Yang and Vilathgamuwa, D Mahinda and Wikner, Evelina and Wei, Zhongbao and Zhang, Xinan and Thiringer, Torbj{\"o}rn and Wik, Torsten and Zou, Changfu},
  journal={IEEE Transactions on Energy Conversion},
  volume={36},
  number={4},
  pages={3208--3220},
  year={2021},
  publisher={IEEE}
}

@inproceedings{PLATM2020,
  title={Optimal control of battery fast charging based-on Pontryagin’s minimum principle},
  author={Park, Saehong and Lee, Donggun and Ahn, Hyoung Jun and Tomlin, Claire and Moura, Scott},
  booktitle={2020 59th IEEE Conference on Decision and Control (CDC)},
  pages={3506--3513},
  year={2020},
  organization={IEEE}
}

@article{Hu2019,
  title={A comparative study of control-oriented thermal models for cylindrical Li-ion batteries},
  author={Hu, Xiaosong and Liu, Wenxue and Lin, Xianke and Xie, Yi},
  journal={IEEE Transactions on Transportation Electrification},
  volume={5},
  number={4},
  pages={1237--1253},
  year={2019},
  publisher={IEEE}
}

@article{Palac2018,
  title={Understanding ageing in Li-ion batteries: a chemical issue},
  author={Palacin, M Rosa},
  journal={Chemical Society Reviews},
  volume={47},
  number={13},
  pages={4924--4933},
  year={2018},
  publisher={Royal Society of Chemistry}
}

@article{Pozzi2022,
  title={Stochastic model predictive control for optimal charging of electric vehicles battery packs},
  author={Pozzi, Andrea and Raimondo, Davide M},
  journal={Journal of Energy Storage},
  volume={55},
  pages={105332},
  year={2022},
  publisher={Elsevier}
}

@article{RCGKG2019,
  title={A feedback charge strategy for Li-ion battery cells based on reference governor},
  author={Romagnoli, Raffaele and Couto, Luis D and Goldar, Alejandro and Kinnaert, Michel and Garone, Emanuele},
  journal={Journal of process control},
  volume={83},
  pages={164--176},
  year={2019},
  publisher={Elsevier}
}

@article{YJYWZ2017,
  title={A transfer function type of simplified electrochemical model with modified boundary conditions and Pade approximation for Li-ion battery: Part 2. Modeling and parameter estimation},
  author={Yuan, Shifei and Jiang, Lei and Yin, Chengliang and Wu, Hongjie and Zhang, Xi},
  journal={Journal of Power Sources},
  volume={352},
  pages={258--271},
  year={2017},
  publisher={Elsevier}
}

@article{selector,
  title={When are selector control strategies optimal for constrained monotone systems?},
  author={Taghavian, Hamed and Drummond, Ross and Johansson, Mikael},
  journal={arXiv preprint arXiv:2312.08829},
  year={2023}
}

@article{malinIFAC,
  title={Informative battery charging: integrating fast charging and optimal experiments},
  author={Andersson, Malin and Taghavian, Hamed and Hjalmarsson, H{\aa}kan and Klass, Verena L{\"o}fqvist and Johansson, Mikael},
  journal={IFAC-PapersOnLine},
  volume={56},
  number={2},
  pages={11160--11166},
  year={2023},
  publisher={Elsevier}
}

\end{document}